\theoremstyle{plain}
\newtheorem{thm}{\protect\theoremname}
\theoremstyle{plain}
\newtheorem{lem}[thm]{\protect\lemmaname}
\theoremstyle{remark}
\newtheorem{rem}[thm]{\protect\remarkname}
\newcommand{\mathsym}[1]{{}}
\newcommand{\iid}{\emph{i.i.d.}\xspace}
\newcommand{\prb}[1]{P(#1)}     % Probability
\newcommand{\rxRV}{r^\rv{x}}      % channel input -> runs
\newcommand{\LV}[1]{}
\newcommand{\upto}{,\dots,}
\newcommand{\ignore}[1]{}
\newcommand{\longpaper}[1]{}
\providecommand{\lemmaname}{Lemma}
\providecommand{\remarkname}{Remark}
\providecommand{\theoremname}{Theorem}
\begin{document}

\title{Alternating Markov Chains for Distribution Estimation in the Presence of Errors}
\author{%
\IEEEauthorblockN{Farzad Farnoud (Hassanzadeh)}
\IEEEauthorblockA{Department of Electrical and \\Computer Engineering\\UIUC\\ Urabna, IL 61801\\ Contact: www.ifp.uiuc.edu/\textasciitilde hassanz1}
\and
\IEEEauthorblockN{Narayana P. Santhanam}
\IEEEauthorblockA{Department of\\ Electrical Engineering\\University of Hawaii at Manoa\\Honolulu, HI 96822\\Email: nsanthan@hawaii.edu}
\and
\IEEEauthorblockN{Olgica Milenkovic}
\IEEEauthorblockA{Department of Electrical and \\Computer Engineering\\UIUC\\ Urabna, IL 61801\\ Email: milenkov@uiuc.edu}
}
\maketitle
\begin{abstract}
We consider a class of small-sample distribution estimators over noisy channels. Our estimators  are designed for repetition channels, and rely on properties of the runs of the observed sequences. These runs are modeled via a special type of Markov chains, termed alternating Markov chains. We show that alternating chains have redundancy that scales sub-linearly with the lengths of the sequences, and describe how to use a distribution estimator for alternating chains for the purpose of distribution estimation over repetition channels.
%Furthermore, we propose a new definition for the pattern of a Markov sequence and show that for some cases, this pattern
%redundancy scales sub-linearly while its classical counterpart exhibits super-linear behavior.
\end{abstract}
\global\long\def\qhalf{q_{1/2}}
\global\long\def\qhalfn{q_{1/2}^{n}}
\global\long\def\ptn{\mathbf{p}}
\global\long\def\prb{P}
\global\long\def\rxRV{r^{\mathbf{x}}}
\global\long\def\pkr#1#2#3{\ptn\left(#1;#2,#3\right)}
\global\long\def\ptns#1{\ptn\left(#1\right)}
\global\long\def\pkrr#1#2#3{\ptn\left(#1;#2,\left[#3\right]\right)}
\global\long\def\pkkr#1#2#3{\ptn\left(#1;\left[#2\right],#3\right)}
\global\long\def\pkkrr#1#2#3{\ptn\left(#1;\left[#2\right],\left[#3\right]\right)}
\global\long\def\pkk#1#2{\ptn\left(#1;\left[#2\right]\right)}
\global\long\def\pk#1#2{\ptn\left(#1;#2\right)}

\section{Introduction}

The problem of estimating the distribution of a source with a large alphabet, based on a small number of observations, is of significant interest in molecular biology, neuroscience, physics, statistics, and learning theory. In order to address this problem, throughout the years a number of sophisticated classes of estimators were developed by Good and Turing \cite{good_population_1953,gale_good-turing_1995} and Orlitsky et al \cite{orlitsky_always_2003,orlitsky-convergence2005}, to cite a few. The idea behind these estimators is to use frequencies of symbol frequencies, rather than simple frequency counts standardly used for Maximum Likelihood (ML) estimation.

An additional problem in this estimation setting arises when some of the observations are inaccurate. Since most known distribution estimators are based on frequency counts, errors that change these counts may have a significant bearing on the accuracy of the method. A particularly interesting case is when the counts are changed by consecutive repetitions of some symbols. %This is frequently the case in high throughput sequencing and population sampling problems~\cite{ledford2011}. 

In \cite{small_sample_sticky}, we described a collection of distribution estimators, based on Expectation Maximization (EM), both for channels with known and channels with unknown repetition parameters. The focal point of the study was a class of sequences, termed alternating sequences, generated by a Markov chain of special topology. The goal of this work is twofold. The first goal is to establish a rigorous analytical framework for evaluating the redundancy of alternating Markov chains. The second goal is to describe how to use alternating sequence distribution estimators for distribution estimation over repetition channels. In particular, we exhibit block and sequential estimators for alternating sequences that have vanishing redundancy and provide for accurate estimation in the presence of repetitions.

It is important to observe that although alternating sequences are generated by a Markov chain, their redundancy cannot be accurately computed using the methods developed in~\cite{Dhulipala2006}. This is due to the fact that the bound of~\cite{Dhulipala2006} are too general to give useful redundancy characterizations for special classes of Markov chains. Surprisingly, the special class of alternating Markov sequences has properties
that may be analyzed using tools developed for i.i.d. sequences, with some appropriate modification. Our analysis reveals that alternating Markov chains have sub-linear pattern redundancy in the sequence length $n$, scaling as $c\,\sqrt{n}+\log(n)$, for some constant $c$. This is a counterpart of the examples described in~\cite{Dhulipala2006}, where Markov chains of redundancy of order $n\log n$ were constructed using permutation patterns. 

The paper is structured as follows. In Section~\ref{sec:small-sample-distn-estimation},
we introduce the problem of distribution estimation over noisy channels and the notion of alternating sequences. In Section \ref{sec:Patterns,-Profiles,-and},
we review the basic ideas and terminology behind the proposed estimation method, including the notions of patterns and profiles. Sections \ref{sec:UB} and \ref{sec:lower-bound} are
devoted to deriving upper and lower bounds on the redundancy of
block estimators for alternating sequences, respectively. A sequential estimator for alternating
sequences is presented
in Section \ref{sec:Sequential-Estimators}. Section \ref{sec:recovery} describes how to determine the source distribution, observed through a noisy channel, based on the estimated probabilities of alternating sequences.

\section{Small-Sample Distribution Estimation\label{sec:small-sample-distn-estimation}}

Consider a sample sequence $\overline{x}$ generated by an \iid source $\mathcal{S}$ defined over a large-cardinality alphabet $\mathcal{A}$. Suppose that the source $\mathcal{S}$ has distribution $p_{\mathcal{S}}$.
The estimator observes an erroneous version of $\overline{x}$,
denoted by $\overline{y}$. The errors are modeled as arising from
a channel $\mathcal{C}$ with input $\overline{x}$ and output $\overline{y}$.
What are the ultimate performance limits for estimating the distribution of the source, given that
the length of $\overline{x}$ is small compared to $|\mathcal{A}|$ or comparable to $|\mathcal{A}|$?

Estimating the distribution of a source based on a noise-free short sequence $\overline{x}$ is a challenging task. ML estimators perform poorly in this setting as they typically overestimate probabilities of seen symbols, while they underestimate probabilities of unseen symbols. More appropriate solutions for this scenario are due to Good and Turing \cite{good_population_1953,gale_good-turing_1995} and Orlitsky et al \cite{orlitsky_always_2003,orlitsky-convergence2005}.

There are two approaches one can follow to address an even more difficult family of problems, namely that of small-sample distribution estimation in the presence of errors.

In one scenario, one may try to first denoise the output sequence $\overline{y}$, so as to obtain an estimate $\hat{\overline{x}}$ of $\overline{x}$, and then apply a small-sample distribution estimator to $\hat{\overline{x}}$. Note that $\hat{\overline{x}}$ depends on both $\overline{y}$ and $p_{\mathcal{S}}$, and thus one needs to estimate $p_{\mathcal{S}}$ in order to estimate $\overline{x}$ and vice versa. This ``estimation loop'' may be resolved via the use of iterative methods that alternate in improving estimates for $\overline{x}$ and $p_{\mathcal{S}}$~\cite{small_sample_sticky}. In another scenario, one may try to first estimate the distribution of $\overline{y}$ and then reconstruct the distribution of $\overline{x}$ by ``inversion'' of the noisy channel. Here, we pursue the second line of reasoning.

The focal point of our inversion study is a special class of Markov sequences that arise in the study of distribution estimation over repetition channels. A repetition channel is a channel which outputs several copies of each input symbol. The number of copies is a random variable with a predetermined distribution of known or unknown paramters. 
One important property of repetition channels is that they maintain the identity and order of symbols in the sequence, and only alter the symbols' \emph{runlengths}.  As an example, the sequence $\overline{x}=$`$committee$' passed through a repetition channel may be observed as $\overline{y}=$`$ccommmiitttee$'. The \emph{alternating sequence} of a sequence $\overline{x}$, $V( \overline{x} ) $, is a sequence obtained from $\overline{x}$ by replacing each run of $\overline{x}$ by one single symbol. We refer to $V( \overline{x} ) = V ( \overline{y})$ as the alternating sequence and denote it by $ \overline{v}$. Note that $ \overline{v} $ is a Markov sequence and its corresponding Markov chain is referred to an alternating Markov chain.

Throughout the rest of the paper, we reserve the symbol $N$ to denote the length of the source output $\overline{x}$. We also use $m$ to denote the cardinality of the alphabet $\mathcal{A}$, which may be infinite, and $n$ to denote the length of the alternating sequence $\overline{v}$.

\section{Patterns, Profiles, and Technical Preliminaries \label{sec:Patterns,-Profiles,-and}}

The pattern $\overline{\psi}=\Psi\left(\overline{x}\right)$ of a sequence $\overline{x}$ is obtained by replacing each symbol by its order of appearance in $\overline{x}$. The profile of $\overline{\psi}$ of a pattern is a vector $\Phi(\overline{\psi})=(\varphi_{1},\cdots,\varphi_{n})$, where $\varphi_{i}$ is the number of symbols that appear $i$ times in $\overline{\psi}$. We use the shorthand notation $\Phi(\overline{x})$ to denote $\Phi(\Psi(\overline{x}))$. Notational confusion can be avoided by noting whether the argument of $\Phi(\cdot)$ is a sequence or a pattern. 

For example, the profile of the pattern $\overline{\psi}=1232421$ is $\varphi=(2,1,1,0,0,0,0)$, since 3 and 4 appear once, 1 appears twice, and 2 appears three times. Note that many patterns may have the same profile. 

It is clear that $\overline{\psi}$ is the pattern of an alternating
sequence $\overline{v}$ of length $n$ if and only if $\psi_{i}\neq\psi_{i+1},$
for $1\le i\le n-1$. As will be seen later, a necessary and sufficient
condition for $\overline{\varphi}$ to be the profile of some alternating
sequence is that $\varphi_{i}=0$ for $i>\left\lceil n/2\right\rceil $. 

\begin{rem}
Observe that every profile of patterns of length $n$ corresponds to a partition of the integer $n$. The number of parts of size $i$ is $\varphi_{i}$ and $\sum_{i}^{n}i\varphi_{i}=n$. In the example above, $\overline{\varphi}=(2,1,1,0,0,0,0)$ corresponds to an (unordered) partition of 7 into two parts of size 1, one part of size 2, and one part of size 3, i.e., 7=1+1+2+3. In the correspondence between partitions and profiles, the number of parts of size $\mu$ equals the number of symbols that appear $\mu$ times. The number of partitions of $n $ is denoted by $\ptns n $.
\end{rem}

\ignore{We find the following notation regarding partitions of integers useful in our subsequent derivations. Let the number of partitions of $n$ be denoted by $\ptns n$, the number of partitions of $n$ into exactly $k$ parts be denoted by $\pk nk$, and let the number of partitions of $n$ into $k$ parts with largest part $r$ be denoted by $\pkr nkr$. An argument with a pair of brackets denotes the maximum allowable value. For example, $\pkrr nkr$ will be used to denote the number of partitions of $n$ into exactly $k$ parts with no part larger than $r$ and $\pkkr nnr$ will be used to denote the number of partitions of $n$ with largest part $r$.}

Let $\mathcal{I}^{n}$ be the collection of \iid distributions over
length $n$ sequences. Consider a probability distribution $p$ over $\mathcal{A}$ 
that assigns probability $0< p_{s} \leq 1$
to $s\in\mathcal{A}$. Then, the distribution induced by $p$ over $\mathcal{A}^{n}$ is denoted by $p^{n}$ and is defined in such a way that, for all $ \overline{x} \in \mathcal A^n$,
\[
p^{n}\left(\overline{x}\right):=\prod_{j=1}^{n}p_{x_{j}}.
\]
Note that $p^{n}\in\mathcal{I}^{n}$.

Every distribution $p$ over $\mathcal{A}$ also induces a distribution $p_{\mathcal{I}_{\Psi}^{n}}$,
\[
p_{\mathcal{I}_{\Psi}^{n}}\left(\overline{\psi}\right):=p^{n}\left(\left\{ \overline{x}:\Psi\left(\overline{x}\right)=\overline{\psi}\right\} \right),
\]
over patterns of length $n$. The set of all such induced distributions is denoted by $\mathcal{I}_{\Psi}^{n}$. We simply write $p\left(\overline{x}\right)$
and $p\left(\overline{\psi}\right)$ if dropping the subscripts and
superscripts causes no confusion. 

Furthermore, $p$ induces a probability distribution over alternating
sequences of length $n$, which for $\overline{v}=v_{1}\cdots v_{n}$ takes the form 
\begin{equation}
p_{\mathcal{V}^{n}}\left(\overline{v}\right):=p_{v_{1}}\prod_{j=2}^{n}\frac{p_{v_{j}}}{1-p_{v_{j-1}}}.\label{eqn:p-alt}
\end{equation}
The set of all such induced distributions is denoted by $\mathcal{V}^{n}$.
The induced distribution $p_{\mathcal{V}_{\Psi}^{n}}$ over alternating
patterns of length $n$ and the set $\mathcal{V}_{\Psi}^{n}$ are defined similarly to their unconstrained sequence counterparts.
Note that ${\mathcal{V}^{n}}$ is Markovian.

\subsection{Properties of Alternating Sequences \label{sec:properties}}

The first issue we address is the relationship between the length of the source sequence and the length of the corresponding alternating sequence. 
The following lemma will be useful in our subsequent discussion.

\begin{lem} \label{lem:n-v-N} Assuming  all symbol probabilities are smaller than $\nicefrac{1}{2}$, we have 
\begin{equation}
P[n\le N/2-\sqrt {8 N \ln N}]<\frac 1N,
\end{equation}
so that with high probability, $n\ge N/2 - \sqrt {8 N \ln N}$.
\end{lem}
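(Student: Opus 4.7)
The core idea is to express $n$ as a sum of indicators on the i.i.d.\ source symbols and then invoke a bounded-differences concentration inequality. Since $\overline{v}=V(\overline{x})$ contains exactly one symbol per run of $\overline{x}$, the length $n$ of $\overline{v}$ equals the number of runs in $\overline{x}$, which can be written as
\[
n \;=\; 1 + \sum_{i=2}^{N} Y_i, \qquad Y_i := \mathbf{1}\{x_i \ne x_{i-1}\}.
\]
Each $Y_i$ is a Bernoulli with mean $1-\sum_{s\in\mathcal{A}}p_s^2$. The hypothesis $p_s<1/2$ for every $s$ implies $\sum_s p_s^2 < \frac{1}{2}\sum_s p_s = \frac{1}{2}$, so $E[Y_i]>1/2$ and hence $E[n] > (N+1)/2 \ge N/2$.

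The main subtlety is that the $Y_i$'s are \emph{not} independent: $Y_i$ and $Y_{i+1}$ share the variable $x_i$, so the usual Chernoff bound for independent Bernoullis does not apply directly. I would circumvent this by viewing $n$ as a deterministic function $f(x_1,\ldots,x_N)$ of the independent source symbols. Altering a single coordinate $x_i$ can flip at most the two indicators $Y_i$ and $Y_{i+1}$, so $f$ satisfies the bounded-differences property with constants $c_i \le 2$ (in fact $c_1,c_N\le 1$ at the endpoints, which only helps). McDiarmid's inequality then yields
\[
P\bigl[\,n \le E[n]-t\,\bigr] \;\le\; \exp\!\left(-\frac{2t^{2}}{\sum_i c_i^{2}}\right) \;\le\; \exp\!\left(-\frac{t^{2}}{2N}\right).
\]

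To conclude, observe that because $E[n]\ge N/2$, the event $\{n \le N/2 - t\}$ is contained in $\{n \le E[n]-t\}$. Substituting $t=\sqrt{8 N\ln N}$ gives exponent $-4\ln N$, and so the probability in the lemma is at most $N^{-4}$, which is comfortably smaller than $1/N$. The "with high probability" statement then follows by complementation. The only step that needs any care is the verification of the bounded-differences constant, but that is a one-line case analysis on whether $x_i$ is at the boundary or interior; everything else is a clean substitution.
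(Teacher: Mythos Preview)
Your proof is correct and essentially identical to the paper's: the paper also identifies $n$ with the run count, bounds $E[n]\ge N/2$ via $\sum_a p_a^2<1/2$, observes that changing a single $x_i$ alters $n$ by at most $2$, and applies a bounded-differences concentration inequality. The only cosmetic difference is that the paper phrases the last step via the Doob exposure martingale and Azuma (obtaining $e^{-t^2/(8N)}$) rather than invoking McDiarmid directly, so your exponent is a factor~$4$ sharper---immaterial for the stated bound.
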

\begin{proof}
For an \iid sequence $\overline{x}=x_{1}\cdots x_{N}$, let $R\left(\overline{x}\right)$ denote the number of runs of $\overline{x}$. Let $n_{j},0\le j\le N,$
be a martingale sequence defined as
\[
n_{j}=E\left[R\left(\overline{x}\right)|x_{1} \cdots x_{j}\right].
\]
This exposure martingale if obtained by revealing the values of $x_{i},1\le i\le N,$ one by one - i.e., the filtration exposes the values of $x_{i},1\le i\le N,$ sequentially.

Note that $n_{0}=E\left[R\left(\overline{x}\right)\right]$ and $n=n_{N}=R\left(\overline{x}\right)$.
If $\overline{x}_{1}$ and $\overline{x}_{2}$ differ in only one
symbol, then $\left|R\left(\overline{x}_{1}\right)-R\left(\overline{x}_{2}\right)\right|\le2$.
This implies \cite[Theorem 7.4.1]{alon2008probabilistic} that 
\[
\left|n_{j+1}-n_{j}\right|\le2.
\]
 Hence, for all $\lambda>0$ \cite[Theorem 7.4.2]{alon2008probabilistic},
\begin{equation}\label{eq:mrtngl-prb}
P[n\le E[n]-2\lambda\sqrt N]<e^{-\lambda^{2}/2}
\end{equation}
and thus $n$ is concentrated around its mean. To determine the tail behavior of the distribution of $n$, 
we need to find the expected number of runs $E\left[n\right]$
of an \iid sequence $\overline{x}$. Note that $n$ is a simple random variable,
\[
n=\sum_{i=1}^{N}I_{i}
\]
where $I_{i}$ is indicator function of the event that a run starts. More precisely, $I_{i}=1$ iff a run starts at position
$i$ of $\overline{x}$, and $I_{i}=0$, otherwise. 

For $i=1$, we have
$I_{1}=1$ and for $2\le i\le N$, we have 
\[
P\left[I_{i}=1\right]=\sum_{a\in\mathcal{A}}p_{a}\left(1-p_{a}\right)=1-\sum_{a\in\mathcal{A}}p_{a}^{2}.
\]
Hence,
\begin{align*}
E\left[n\right] & =\sum_{i=1}^{N}E\left[I_{i}\right] =1+\left(N-1\right)\left(1-\sum_{a\in\mathcal{A}}p_{a}^{2}\right).
\end{align*}
Assuming all symbol probabilities are smaller than $1/2 $ implies $\left(1-\sum_{a\in\mathcal{A}}p_{a}^{2}\right)\ge1/2$ and thus
$E[n]\ge N/2$. Thus, under this assumption, form \eqref{eq:mrtngl-prb} with $\lambda = \sqrt{2\ln N}$, the lemma follows.
\end{proof}

The second property, stated in the following lemma, concerns patterns with same probability. The lemma that follows provides the means for analyzing alternating sequences using the techniques developed
for i.i.d sequences.
\begin{lem}\label{lem:same-multip}
Let $\overline{\psi}=\psi_{1}\psi_{_{2}}\cdots\psi_{n}$ and $\overline{\psi}'=\psi'_{1}\psi'_{_{2}}\cdots\psi'_{n}$ be two alternating patterns with profile $\overline \varphi$ such that the multiplicity of $\psi_n$ is equal to the multiplicity of $\psi'_n$. For any \iid distribution $p$, we have $p(\overline\psi)=p(\overline\psi')$.
\end{lem}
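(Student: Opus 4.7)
The plan is to rewrite $p_{\mathcal V^n}(\overline v)$ in a form that isolates the role of the final symbol, express the pattern probability as a sum over injective alphabet labelings, and then symmetrize so as to read off the dependence of the sum on the profile and on the last-symbol multiplicity.

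The first step is a direct computation from (\ref{eqn:p-alt}). The denominator $\prod_{j=2}^n(1-p_{v_{j-1}})$ accumulates one factor $(1-p_a)$ for every occurrence of symbol $a$ in $\overline v$ except for the final one, which yields a closed form of the shape
\[
p_{\mathcal V^n}(\overline v)=(1-p_{v_n})\prod_a\left(\frac{p_a}{1-p_a}\right)^{\mu_a},
\]
where $\mu_a$ is the multiplicity of $a$ in $\overline v$. Hence $p_{\mathcal V^n}(\overline v)$ depends on $\overline v$ only through the collection of symbol--multiplicity pairs it carries and through the identity of the final symbol $v_n$.

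Summing over alternating sequences with pattern $\overline\psi$ amounts to summing over injective labelings $f:[k]\to\mathcal A$ of the $k$ distinct indices of $\overline\psi$. Writing $\ell\in[k]$ for the index of $\psi_n$ and $\mu_1,\dots,\mu_k$ for the index multiplicities, the closed form above expresses each summand as a product of per-index weights that depend only on the pair $(\mu_i,[i=\ell])$ and on $f(i)$. Consequently, the sum is invariant under any permutation of $[k]$ that preserves the multiset of tagged multiplicities $\{(\mu_i,[i=\ell]):i\in[k]\}$, and this tagged multiset is determined by the profile $\overline\varphi$ together with $\mu_\ell$: one pair $(\mu_\ell,1)$ plus the remaining profile multiplicities, each tagged with $0$. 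Since the hypotheses of the lemma force the tagged multisets of $\overline\psi$ and $\overline\psi'$ to coincide, their pattern probabilities agree. I do not foresee a real obstacle; the only mildly subtle point is that $\mu_\ell$ may coincide with other multiplicities, but removing one copy of $\mu_\ell$ from a multiset is unambiguous, so this causes no issue.
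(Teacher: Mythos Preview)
Your proof is correct and follows essentially the same approach as the paper: both rewrite $p_{\mathcal V^n}(\overline v)$ so that it depends only on the symbol multiplicities and the identity of the last symbol, then sum over injective labelings and conclude by a relabeling (permutation) argument. Your closed form $(1-p_{v_n})\prod_a\bigl(p_a/(1-p_a)\bigr)^{\mu_a}$ is a slightly cleaner packaging of the paper's expression $\frac{p_{f(a)}^{\mu_a}}{(1-p_{f(a)})^{\mu_a-1}}\prod_{i\neq a}\bigl(p_{f(i)}/(1-p_{f(i)})\bigr)^{\mu_i}$, and your tagged-multiset symmetrization is exactly the abstract version of the paper's explicit bijection $f'=f\circ g$.
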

%%%%%%%%%%%%%%%%%%%%%%%%%%%%%%%%%%%%%
\begin{proof}
Let the alphabet be
$\mathcal{A}=\left\{ a_{1},a_{2},\cdots\right\} $ and let the probability of $a_{i}$ be $p_{i}$. Assume $\psi_n=a$ and $\psi'_n=b$, with $a,b\in[k]$, where $k$ is the number of elements appearing in $\overline{\psi}$. Let the multiplicity of $i\in[k]$ in $ \overline\psi$ be denoted by $\mu_i$ and the multiplicity of $i$ in $ \overline\psi'$ be denoted by $\mu'_i$. Note that, by assumption, $\mu_a=\mu'_b$.

Furthermore, let $f$ be a bijection
between the set $\left\{ 1,2,\cdots,k\right\} $ and a subset $A\subseteq\mathcal{A}$
of size $k$. This bijection basically determines what symbol of the
alphabet goes to what symbol in the pattern. Then the probability assigned to pattern $\overline{\psi}$
is 
\begin{align*}
p\left(\overline{\psi}\right) & =\sum_{f}p_{f\left(\psi_{1}\right)}\prod_{j=2}^{n}\frac{p_{f\left(\psi_{j}\right)}}{1-p_{f\left(\psi_{j-1}\right)}}\\
&=\sum_{f}
\frac{\left(p_{f(a)}\right)^{\mu_{a}}}{\left(1-p_{f\left(a\right)}\right)^{\mu_a-1}}
\prod_{i\in[k]\backslash\{a\}}\left(\frac{p_{f\left(i\right)}}{1-p_{f\left(i\right)}}\right)^{\mu_{i}}
\end{align*}
where the summation is over all bijections between the set $\left\{ 1,2,\cdots,k\right\} $
and a subset $A\subseteq\mathcal{A}$ of size $k$. There exists a permutation $g$ over $[k]$ with $g(b)=a$ and, $g(j)=i$ for all $j\in[k]\backslash\{b\}$, 
such that $\mu_i = \mu'_j$. Then, by letting $f'(\cdot)=f(g(\cdot))$, we have
\begin{align*}
&\frac{\left(p_{f(a)}\right)^{\mu_{a}}}{\left(1-p_{f\left(a\right)}\right)^{\mu_a-1}}
\prod_{i\in[k]\backslash\{a\}}\left(\frac{p_{f\left(i\right)}}{1-p_{f\left(i\right)}}\right)^{\mu_{i}}\\
&=\frac{\left(p_{f'(b)}\right)^{\mu'_{b}}}{\left(1-p_{f'\left(b\right)}\right)^{\mu'_b-1}}
\prod_{j\in[k]\backslash\{b\}}\left(\frac{p_{f'\left(j\right)}}{1-p_{f'\left(j\right)}}\right)^{\mu'_{j}}\cdot
\end{align*}
By summing both sides of the equality over all bijections between the set $\left\{ 1,2,\cdots,k\right\} $
and a subset $A\subseteq\mathcal{A}$ of size $k$, it follows that $p\left( \overline\psi\right) = p\left( \overline\psi'\right)$.
\end{proof}

\section{Block Estimators for Alternating Sequences}

\subsection{Upper Bound on Redundancy of Alternating Patterns\label{sec:UB}}
We start by deriving an upper bound
on the \emph{worst case} redundancy of alternating sequences, defined, with respect to a collection
of distributions $\mathcal{P}$, as 
\begin{equation}
\begin{split}\hat{R}(\mathcal{P})=\inf_{{q}}\sup_{p\in\mathcal{P}}\sup_{u\in\mathcal{U}(p)}\log\frac{p(u)}{{q}(u)},\end{split}
\label{eq:rdndncy-def}
\end{equation}
 where $\mathcal{U}(p)$ is the support set of the distribution $p$,
${q}(u)$ is the probability assigned to $u$ by the estimator ${q}$,
and $p(u)$ is the probability of $u$ with respect to the distribution
$p$.
As already pointed out, alternating sequences are first-order Markov processes. Prior results by Dhulipala and Orlitsky \cite{Dhulipala2006} showed that in general, 
the per-symbol pattern redundancy of a first-order Markov process may be unbounded. For the particular case of alternating sequences, however, we show that the 
per-symbol pattern redundancy tends to zero.

Suppose that $\mathcal{P}$ is a collection of distributions over patterns of length $n$ and let $\Psi(\mathcal{P})$ denote the set of patterns with positive probability with respect to some distribution in $\mathcal{P}$. Denote the set of profiles of patterns in $\Psi(\mathcal P)$ by $\Phi\left(\mathcal P\right)$ and let $\tilde\Psi^n:=\Psi(\mathcal{V}_{\Psi}^{n})$ and $\tilde{\Phi}^{n}:=\Phi(\mathcal{V}_{\Psi}^{n})$. Also, for a pattern $\psi^n$ of length $n$, let the set of all alternating patterns with same profile as that of $\psi^n$ be denoted by $\tilde{\Psi}^{-1}\left(\Phi\left(\psi^n\right)\right)$.

 If $\Psi(\mathcal{P})$ is partitioned into $M $ classes such that any distribution in $\mathcal{P}$ assigns the \emph{same probability to all patterns in the same class}, then the worst case redundancy is bounded by \cite{orlitsky_universal_2004}
\begin{equation}
\hat{R}\bigl(\mathcal{P}\bigr)\le\log M.
\label{eq:rdndncy-UB}
\end{equation}

%Next, note that distribution estimators can operate on blocks of source/channel outputs or sequentially, one symbol at a time.
%The former class of estimators is therefore known as \emph{block estimators}, while the latter is known as \emph{sequential estimators}.
%In the following sections, the existence of sequential estimators with asymptotically vanishing redundancy is shown. In this section, we examine the possibility of recovering the probabilities of the original source from the %probabilities of the alternating sequence.

%From \eqref{eqn:p-alt}, it is clear that for any $p\in\mathcal{V}_{\Psi}^{n}$, $p\left(\psi^n\right)$ depends only on the profile of $\psi^n$. Hence, all patterns in $\tilde{\Psi}^{-1}\left(\Phi\left(\psi^n\right)\right)$ have the same probability and thus we can use \eqref{eq:rdndncy-UB} to upper-bound $\hat{R}(\mathcal{V}_{\Psi}^{n})$.

For the collection $\mathcal{P}_{\Psi}^{n}$, patterns with the same profile have the same probability and thus $M=|\Phi(\mathcal{P}_{\Psi}^{n})|$. The cardinality of $\Phi(\mathcal{P}_{\Psi}^{n})$ is equal to the number of partitions of $n$ because there is a one-to-one correspondence between distinct profiles $\Phi(x^{n})$ of \iid sequences $x^{n}$ and partitions of $n$. Not every partition corresponds to a profile of an alternating sequence. For example, for $\varphi=(0,0\upto0,1)$, a partition with one part of size $n$, there is no alternating sequence $v^{n}$ such that $\Phi(v^{n})=\varphi$ and thus $\varphi\notin\big|\tilde{\Phi}^{n}\big|$.
It is, however, easy to see that every profile of an alternating sequence corresponds to a unique partition and hence $\big|\tilde{\Phi}^{n}\big|\le\ptns n$.

\begin{thm}\label{thm:1161}
The worst case redundancy of $\mathcal{V}_{\Psi}^{n}$
grows at most linearly with $\sqrt{n}$. More precisely,
\[
\begin{split}\hat{R}(\mathcal{V}_{\Psi}^{n}) & \le\left(\pi\sqrt{\frac{2}{3}}\log e\right)\sqrt{n}+\log n\end{split}
\]
\end{thm}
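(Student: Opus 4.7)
The plan is to apply the class-counting redundancy bound~\eqref{eq:rdndncy-UB} after partitioning $\tilde\Psi^n$ into equivalence classes on which every distribution in $\mathcal{V}_{\Psi}^{n}$ is constant. Lemma~\ref{lem:same-multip} tells me exactly what the right equivalence relation is: two alternating patterns are equivalent if they share the same profile and, in addition, the multiplicity of their last symbol coincides. By that lemma, every \iid distribution (and hence every induced distribution in $\mathcal{V}_\Psi^n$) assigns the same probability to all members of such a class, so~\eqref{eq:rdndncy-UB} gives $\hat R(\mathcal{V}_\Psi^n)\le \log M$, where $M$ is the number of these equivalence classes.

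Next I would count $M$. The classes are indexed by pairs $(\varphi,\mu)$ with $\varphi\in\tilde\Phi^n$ and $\mu$ the multiplicity of the last symbol. As already observed in the paragraph preceding the theorem, $|\tilde\Phi^n|\le \ptns n$ (the number of integer partitions of $n$), since distinct alternating profiles inject into partitions of $n$. The last-symbol multiplicity $\mu$ lies in $\{1,2,\dots,n\}$, so the crude bound $M\le n\,\ptns n$ holds. (A tighter count using the number of distinct part sizes in a partition, which is $O(\sqrt n)$, is available but unnecessary for the stated theorem.)

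Finally I would substitute the Hardy--Ramanujan upper bound on the partition function, namely $\ptns n\le \exp\!\bigl(\pi\sqrt{2n/3}\bigr)$, which holds for all $n\ge 1$ and can be read off from the standard saddle-point estimate $\ptns n\sim \frac{1}{4n\sqrt 3}\exp\!\bigl(\pi\sqrt{2n/3}\bigr)$. Converting to base-$2$ logarithms,
\[
\log M \;\le\; \log n + \log \ptns n \;\le\; \log n + \pi\sqrt{\tfrac{2}{3}}\,(\log e)\,\sqrt n,
\]
and combining with $\hat R(\mathcal{V}_\Psi^n)\le \log M$ yields the claim.

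The main obstacle is really conceptual rather than computational: one must check that the equivalence relation suggested by Lemma~\ref{lem:same-multip} is simultaneously coarse enough to keep $M$ subexponential and fine enough for the lemma to apply, i.e.\ that tagging each profile with the last-symbol multiplicity is genuinely sufficient to equalise probabilities across the class. Once that is in hand, the $\sqrt n$ scaling is inherited directly from the growth of $\ptns n$, and the $\log n$ term is the slack absorbed by using the trivial bound $\mu\le n$ instead of a sharper count of distinct part sizes.
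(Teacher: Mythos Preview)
Your proposal is correct and follows essentially the same route as the paper: partition $\tilde\Psi^n$ by the pair (profile, multiplicity of the last symbol), invoke Lemma~\ref{lem:same-multip} to see probabilities are constant on classes, bound the number of classes by $n\cdot\ptns n$, and finish with the Hardy--Ramanujan bound on $\ptns n$. The only cosmetic difference is that the paper, instead of quoting~\eqref{eq:rdndncy-UB} directly, unrolls it by exhibiting the explicit estimator $q(\overline\psi)\propto 1/L(\overline\psi)$ and bounding $\sum_{\overline\psi'}1/L(\overline\psi')\le n\,|\tilde\Phi^n|$; this explicit $q$ is reused later in the sequential-estimator section, which is why the paper presents it rather than citing the black-box bound.
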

%%%%%%%%%%%%%%%%%%%%%%%%%%%%%%%%%%%%%%%%
\begin{proof}
From Lemma \ref{lem:same-multip}, all patterns with the same profile and the same multiplicity of the last element, have the same probability. Hence, for any distribution $p$ and any pattern $ \overline\psi\in\tilde\Psi^n$,
\begin{equation}\label{eq:maxprb}
p(\overline\psi)\le \frac 1 {L(\overline\psi)}
\end{equation}
where $L( \overline\psi)$ denote the number of patterns with the same profile and the same multiplicity of the last element as $ \overline\psi$.

Consider an estimator $q$ that assigns probability
\begin{equation}\label{eq:qL}
q\left(\overline{\psi}\right)=\frac{\nicefrac{1}{L(\overline\psi)}} {\sum_{\overline{\psi}'\in\tilde{\Psi}^{n}}\nicefrac{1}{L( \overline\psi')}}
\end{equation}
to $ \overline\psi\in\tilde\Psi^n$. We have that 
\begin{align}
\sup_{p}\sup_{\overline{\psi}\in\tilde{\Psi}^{n}}\frac{p\left(\overline{\psi}\right)}{q\left(\overline{\psi}\right)}
&\le {\sum_{\overline{\psi}'\in\tilde{\Psi}^{n}}\nicefrac{1}{L( \overline\psi')}}\\
&\le \sum_{ \overline \varphi\in \tilde{\Phi}^{n} } \sum_{k\in[n]} 
\sum_{\stackrel{\overline\psi': \Phi(\overline\psi')= \varphi,}{\mu_{\overline\psi'}(\psi'_n)=k}}
\nicefrac{1}{L( \overline\psi')}.
\end{align}
In the triple summation above, the index $k$ corresponds to the multiplicity of the last element $\psi'_n$ of $\overline\psi'$. By definition of $L(\overline\psi')$, we have
\begin{align}
\sum_{\stackrel{\overline\psi': \Phi(\overline\psi')=\overline\varphi,}{\mu_{\overline\psi'}(\psi'_n)=k}}
\nicefrac{1}{L( \overline\psi')}\le 1
\end{align}
and thus
\begin{align}
\sup_{p}\sup_{\overline{\psi}\in\tilde{\Psi}^{n}}\frac{p\left(\overline{\psi}\right)}{q\left(\overline{\psi}\right)}
&\le \sum_{ \overline \varphi\in \tilde{\Phi}^{n} } \sum_{k\in[n]} 1 \le n\big|\tilde{\Phi}^{n}\big|
\end{align}
which implies that 
\begin{equation}\label{eq:profile-UB}
\hat{R}\left(\mathcal{V}_{\Psi}^{n}\right)
\le\log\big|\tilde{\Phi}^{n}\big|+\log n\le \log \ptns n+\log n.
\end{equation}
The theorem then follows from $ \ptns n\le e^{\pi{\left(\frac{2}{3}\right)} ^{\frac{1}{2}}n^{\frac{1}{2}}}$ \cite[pp.  8--102]{andrews_theory_1998}.
\end{proof}

In \eqref{eq:profile-UB}, $\ptns n$ is used as an upper bound for $\big|\tilde{\Phi}^{n}\big|$. It may seem possible that a tighter upper bound for $\big|\tilde{\Phi}^{n}\big|$ improves Theorem \ref{thm:1161}. However, the following lemma shows that this is not the case and $\ptns n $ is sufficiently tight.
\begin{lem}
\label{lem:card-alt-profiles}For the cardinality of $\tilde{\Phi}^{n}$
we have
\[
\big|\tilde{\Phi}^{n}\big|=\ptns n\left(1-\left|O(\sqrt{n}e^{-\sqrt{n}})\right|\right)\cdot
\]
\end{lem}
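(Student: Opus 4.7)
The plan is to reduce the question to a count of integer partitions of $n$ with a constraint on the largest part, and then to compare this count to $\ptns n$ using the Hardy--Ramanujan asymptotic. From the remark in Section~\ref{sec:Patterns,-Profiles,-and}, profiles of length $n$ are in bijection with partitions of $n$, and the necessary-and-sufficient condition stated there identifies $\tilde{\Phi}^{n}$ with the set of partitions of $n$ whose largest part is at most $\lceil n/2\rceil$. Hence the gap $\ptns n-\big|\tilde{\Phi}^{n}\big|$ counts partitions of $n$ with a part exceeding $\lceil n/2\rceil$.

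The key combinatorial observation is that any such partition has a \emph{unique} oversized part: if $s>n/2$ then the remaining parts sum to $n-s<s$ and cannot contain another part of size $\ge s$ (in fact none of size larger than $n/2$). Consequently, partitions of $n$ with largest part exactly $s$, for $s>\lceil n/2\rceil$, are in bijection with unrestricted partitions of $n-s$, yielding
\[
\ptns n-\big|\tilde{\Phi}^{n}\big| \;=\;\sum_{s=\lceil n/2\rceil+1}^{n}\ptns{n-s}\;=\;\sum_{t=0}^{\lfloor n/2\rfloor-1}\ptns t.
\]

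To turn this identity into the asserted asymptotic, I would invoke the Hardy--Ramanujan estimate $\ptns n\sim(4n\sqrt{3})^{-1}\exp(\pi\sqrt{2n/3})$, together with the fact that $\ptns\cdot$ is increasing with ratio $\ptns{t+1}/\ptns t = 1+\Theta(1/\sqrt t)$. Summing a geometric-type tail then gives $\sum_{t\le M}\ptns t=O(\sqrt{M}\,\ptns M)$. Applying this with $M=\lfloor n/2\rfloor$, dividing by $\ptns n$, and substituting the asymptotic on both sides produces a factor of the form $\sqrt{n}\,\exp\bigl(\pi\sqrt{2/3}\bigl(\sqrt{n/2}-\sqrt{n}\bigr)\bigr)=\sqrt{n}\,\exp(-c\sqrt{n})$ with $c=\pi(\sqrt{2}-1)/\sqrt{3}>0$, which matches the form $\sqrt{n}\,e^{-\sqrt{n}}$ asserted in the lemma (up to identifying the precise constant in the exponent).

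The main obstacle is establishing the clean bound $\sum_{t\le M}\ptns t=O(\sqrt{M}\,\ptns M)$, since the Hardy--Ramanujan formula is only asymptotic and not uniform in $t$. My planned workaround is an elementary argument using the estimate $\ptns{t+1}/\ptns t \ge 1+\alpha/\sqrt{t}$ (valid for all sufficiently large $t$ and some $\alpha>0$), which lets one bound partial sums of $\ptns t$ by a telescoping-geometric tail whose total is $O(\sqrt{M}\,\ptns M)$. A secondary bookkeeping concern is that the parity of $n$ shifts $\lceil n/2\rceil$ versus $\lfloor n/2\rfloor$ by one, but this only affects implicit constants and is absorbed by the $O(\cdot)$.
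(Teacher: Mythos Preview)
Your proposal is correct and follows essentially the same route as the paper: identify $\tilde{\Phi}^{n}$ with partitions of $n$ having no part exceeding $\lceil n/2\rceil$, write the defect as $\sum_{i>\lceil n/2\rceil}\ptns{n-i}$ via the ``unique oversized part'' observation, and then estimate this sum relative to $\ptns n$. The only difference is the final asymptotic step---the paper invokes Odlyzko's ratio formula $\ptns{n-i}/\ptns n=(1+O(n^{-1/6}))e^{-i\pi/\sqrt{6n}}$ and sums it directly, whereas you propose to bound $\sum_{t\le M}\ptns t=O(\sqrt{M}\,\ptns M)$ by hand from Hardy--Ramanujan and a ratio inequality; both yield the same $\sqrt{n}\,e^{-c\sqrt{n}}$ form.
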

\begin{rem}
The relevant problem of finding the number of partitions of $n$ into
at most $k$ parts for $k\ge n^{-1/6}$ was studied by Szekeres \cite{szekeres_asymptotic_1951,szekeres_asymptotic_1953}.
Our proof however is much simpler since it considers a special case
where $k\ge n/2$.
\end{rem}
\begin{proof}
We first show that there is a one-to-one correspondence between $\tilde{\Phi}^{n}$
and partitions of $n$ with no part larger than $\left\lfloor \frac{n+1}{2}\right\rfloor $.
Clearly, each $\overline{\varphi}\in\tilde{\Phi}^{n}$ determines
a unique partition of $n$. For a profile $\varphi^{n}$ with a part
$\mu$ with size larger than $\left\lfloor \frac{n+1}{2}\right\rfloor $
suppose $x^{n}$ is some sequence such that $\varphi^{n}=\Phi\left(x^{n}\right)$.
There is a symbol in $x^{n}$ that appears $\mu$ times, say $a$.
We need at least $\mu-1$ other symbols to separate every two occurrences
of $a$. However, this is not possible since $\mu+\mu-1>n$ and thus
$x^{n}$ is not an alternating sequence. On the other hand, if all
parts are of size at most $\left\lfloor \frac{n+1}{2}\right\rfloor $,
occurrences of every symbol can be separated by other symbols. This
bijection implies that  
\begin{equation}
\big|\tilde{\Phi}^{n}\big|=\ptns {n,\left\lfloor \frac{n+1}{2}\right\rfloor}\label{eqn:card-alt-profiles}
\end{equation}
where $\ptns {n,r}$ denote the number of partitions of $n$ with largest part of size at most $r$.
Furthermore, 
\begin{align*}
\ptns {n,\left\lfloor \frac{n+1}{2}\right\rfloor}  =\ptns n\biggl(1-\sum_{i=\left\lfloor \frac{n+1}{2}\right\rfloor +1}^{n}\frac{\ptns{n-i}}{\ptns n}\biggr).
\end{align*}
 The lemma then follows by using \cite{odlyzko_asymptotic_1995},
$\ptns{n-i}/\ptns n=\left(1+O\left(n^{-1/6}\right)\right)e^{\frac{-i\pi}{\sqrt{6n}}}.$
\end{proof}

\subsection{Lower Bound on the Redundancy of Alternating Patterns\label{sec:lower-bound}}

In subsection \ref{sec:UB}, we saw that the redundancy of patterns of
alternating sequences is $O\left(\sqrt{n}\right)$. Here, we show
that it is bounded from below by a constant multiple of $n^{1/3}$.
\begin{lem}\label{lem:Prob-LB}
Let $\overline{\psi}=1\psi_{1}1\psi_{2}\cdots1\psi_{n/2}$, for even
$n$, and $\overline{\psi}=1\psi_{1}1\psi_{2}\cdots1\psi_{\left\lfloor n/2\right\rfloor }1$, for odd $n$, be an alternating pattern. For a function $r_{n}\ge1$
of $n$, we have 
\begin{equation}
\sup_{p\in\mathcal{V}_{\Psi}^{n}}p\left(\overline{\psi}\right)\label{eqn:maxprob-tilde}\ge\frac{1}{2}\left(\frac{2r_{n}-1}{2r_{n}}\right)^{\left\lfloor n/2\right\rfloor }\prod_{\mu=1}^{\left\lfloor n/2\right\rfloor }\varphi_{\mu}!\left(\frac{\mu}{\left\lfloor n/2\right\rfloor }\right)^{\mu\varphi_{\mu}},
\end{equation}
where $\overline{\varphi}$ is the profile of the pattern $\psi_{1}\psi_{2}\cdots\psi_{\left\lfloor n/2\right\rfloor }$. \end{lem}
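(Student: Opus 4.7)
The plan is to exhibit an explicit \iid distribution $p$ on $\mathcal{A}$ and lower-bound the induced pattern probability $p(\overline{\psi})$ directly. Writing $m = \lfloor n/2 \rfloor$ and $K = \sum_\mu \varphi_\mu$, I would pick $K+1$ distinct symbols $a_0, a_1, \ldots, a_K \in \mathcal{A}$ and set
\[
p_{a_0} = 1 - \frac{1}{2r_n}, \qquad p_{a_j} = \frac{1}{2r_n}\cdot\frac{\mu_{j+1}}{m} \quad (j = 1, \ldots, K),
\]
where $\mu_{j+1}$ is the multiplicity in the small pattern $\psi_1\cdots\psi_m$ of the pattern symbol labelled $j+1$ under a fixed ordering. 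These probabilities sum to one because $\sum_j \mu_{j+1} = m$, and the symbol $a_0$ is intended to play the role of the ``1'' appearing at every other position of $\overline{\psi}$.

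The next step is to fix a canonical bijection $f_0$ with $f_0(1) = a_0$ and $f_0(j+1) = a_j$, plug $f_0(\overline{\psi})$ into \eqref{eqn:p-alt}, and separate the factors by position type: each small-pattern position contributes $p_{f_0(\psi_i)}/(1 - p_{a_0})$ and each interior ``1'' position contributes $p_{a_0}/(1 - p_{f_0(\psi_i)})$. Grouping $\prod_{i=1}^m p_{f_0(\psi_i)}$ by multiplicity class yields $(1/(2r_n))^m \prod_\mu (\mu/m)^{\mu \varphi_\mu}$, which cancels $(1-p_{a_0})^m = (1/(2r_n))^m$ in the denominator exactly. Bounding the remaining $\prod(1 - p_{f_0(\psi_i)}) \le 1$ then leaves $p_{a_0}^{\alpha}\prod_\mu (\mu/m)^{\mu \varphi_\mu}$, where $\alpha = m$ for even $n$ and $\alpha = m+1$ for odd $n$. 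The factor $\tfrac{1}{2}$ in the statement is precisely what absorbs the extra $p_{a_0} \ge \tfrac{1}{2}$ (valid since $r_n \ge 1$) in the odd case, so both parities reduce to $\left(\frac{2r_n-1}{2r_n}\right)^m \prod_\mu (\mu/m)^{\mu \varphi_\mu}$ times $\tfrac{1}{2}$.

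To finish, I would recover the $\prod_\mu \varphi_\mu!$ factor by summing over bijections. Because $p_{a_j}$ depends on $j$ only through the multiplicity $\mu_{j+1}$, any bijection obtained from $f_0$ by permuting labels within a single multiplicity class yields a sequence of identical probability; there are exactly $\prod_\mu \varphi_\mu!$ such bijections, each contributing the same term to $p(\overline{\psi}) = \sum_f p^n(f(\overline{\psi}))$, which multiplies the lower bound by the required factor and produces the claimed inequality. The main obstacle is the bookkeeping: correctly counting the ``1''-positions by parity of $n$, verifying that the $(1/(2r_n))^m$ factors cancel cleanly only because the symbol masses were chosen proportional to the multiplicities, and justifying that the enumerated permutations really do preserve sequence probability. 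The individual algebraic steps are otherwise routine once the distribution is in place.
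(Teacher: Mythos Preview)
Your proposal is correct and follows essentially the same route as the paper: construct an explicit distribution with one heavy symbol of mass $1-\tfrac{1}{2r_n}$ and the remaining mass split proportionally to the small-pattern multiplicities, lower-bound the alternating-chain probability of one representative sequence, and then sum over the $\prod_\mu \varphi_\mu!$ relabelings within multiplicity classes. The only cosmetic differences are that you normalize the light symbols by $m=\lfloor n/2\rfloor$ (so the distribution is exactly a probability for both parities, whereas the paper normalizes by $n$), and that the paper extracts the factor $\tfrac12$ in the even case from a crude bound $\varphi_{n/2}!\le 2$ rather than, as you do, simply weakening the even-$n$ bound to match the odd case where the extra $p_{a_0}\ge \tfrac12$ genuinely costs a factor of two.
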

%%%%%%%%%%%
\begin{proof}
Note that since $\overline{\psi}$ is an alternating pattern, we have
$\psi_{j}\neq1$ for $1\le j\le\left\lfloor n/2\right\rfloor $. Consider
the alphabet $\mathcal{A}=\left\{ a,s_{1},\cdots,s_{m}\right\},$
where $m$ is the largest number appearing in $\overline{\psi}$ minus
one. Let $\overline{v}$ be a sequence with pattern $\overline{\psi}$
starting with symbol $a$. Let $p$ be a distribution defined as 
\[
p_{i}=\begin{cases}
\frac{\mu\left(i,\overline{v}\right)}{nr_{n}}, & \qquad i\neq a,\\
1-\frac{1}{2r_{n}}, & \qquad i=a,
\end{cases}
\]
where $\mu\left(i,\overline{v}\right)$ is the number of occurrences
of $i$ in $\overline{v}$. First, suppose that $n$ is even. We then have
\begin{align*}
&p\left(\overline{v}\right) =\frac{p_{a}p_{v_{1}}}{1-p_{a}}\prod_{j=2}^{n/2}\left(\frac{p_{a}}{1-p_{v_{j-1}}}\frac{p_{v_{j}}}{1-p_{a}}\right)\\
 & \ge\left(\frac{p_{a}}{1-p_{a}}\right)^{n/2}\prod_{j=1}^{n/2}p_{v_{j}}
 =\left(\frac{2r_{n}-1}{2r_{n}}\right)^{n/2}\prod_{\mu=1}^{n/2}\left(\frac{\mu}{n/2}\right)^{\mu\varphi_{\mu}}.
\end{align*}
The position of $a$ is fixed in $\overline{v}$, but the symbols $\left\{ s_{1},\cdots,s_{m}\right\} $
that appear the same number of times can be swapped without changing
the pattern $\Psi\left(\overline{v}\right)$ of $\overline{v}$. This
can be done in $\prod_{\mu=1}^{n/2-1}\varphi_{\mu}!$ ways. Hence,
there are $\prod_{\mu=1}^{n/2-1}\varphi_{\mu}!$ sequences with pattern
$\overline{\psi}$ and probability $p\left(\overline{v}\right)$.
Since $\varphi_{n/2}\le2$, we have 
\[
\prod_{\mu=1}^{n/2-1}\varphi_{\mu}!\ge\frac{1}{2}\prod_{\mu=1}^{n/2}\varphi_{\mu}!.
\]
 Thus, 
\begin{align*}
\sup_{p\in\mathcal{V}_{\Psi}^{n}}p\left(\overline{\psi}\right)&\ge\frac{1}{2}\left(\prod_{\mu=1}^{n/2}\varphi_{\mu}!\right)p\left(\overline{v}\right)\\
&=\frac{1}{2}\left(\frac{2r_{n}-1}{2r_{n}}\right)^{n/2}\prod_{\mu=1}^{n/2}\varphi_{\mu}!\left(\frac{\mu}{n/2}\right)^{\mu\varphi_{\mu}}\cdot
\end{align*}

Next suppose $n$ is odd. Then,
\begin{align*}
p\left(\overline{v}\right) & =p_{a}\prod_{j=1}^{\left\lfloor n/2\right\rfloor }\left(\frac{p_{v_{j}}}{1-p_{a}}\frac{p_{a}}{1-p_{v_{j-1}}}\right)\\
&\ge\frac{1}{2}\left(\frac{p_{a}}{1-p_{a}}\right)^{\left\lfloor n/2\right\rfloor }\prod_{j=1}^{\left\lfloor n/2\right\rfloor }p_{v_{j}}\\
 & =\frac{1}{2}\left(\frac{2r_{n}-1}{2r_{n}}\right)^{\left\lfloor n/2\right\rfloor }\prod_{\mu=1}^{\left\lfloor n/2\right\rfloor }\left(\frac{\mu}{\left\lfloor n/2\right\rfloor }\right)^{\mu\varphi_{\mu}}
\end{align*}
where the inequality follows since $p_{a}\ge1/2$. The number of sequences
with pattern $\overline{\psi}$ and starting with $a$ is $\prod_{\mu=1}^{\left\lfloor n/2\right\rfloor }\varphi_{\mu}!$.\end{proof}

\begin{thm}\label{thm:LB}
For the collection $\mathcal{V}_{\Psi}^{n}$ of distributions,
\[
\hat{R}\left(\mathcal{V}_{\Psi}^{n}\right)\ge\frac{1}{2^{1/3}}\log\left(\frac{e^{23/12}}{\sqrt{2\pi}}\right)n^{1/3}\left(1+o\left(1\right)\right).
\]
\end{thm}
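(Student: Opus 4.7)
The plan is to invoke the standard Shtarkov-style lower bound: for any prior $q$ on $\tilde\Psi^n$ and any subset $S\subseteq\tilde\Psi^n$, the constraint $\sum_{\overline\psi\in S}q(\overline\psi)\le1$ gives
\[
\hat R(\mathcal V_\Psi^n)\ \ge\ \log\sum_{\overline\psi\in S}\sup_{p\in\mathcal V_\Psi^n}p(\overline\psi).
\]
I would take $S$ to be the collection of ``staircase'' patterns $1\psi_1 1\psi_2\cdots$ to which Lemma~\ref{lem:Prob-LB} applies. Since every such pattern has $\psi_j\neq 1$, stripping the $1$'s and shifting every remaining label down by one exhibits a bijection between $S$ and the set of \emph{arbitrary} patterns of length $m=\lfloor n/2\rfloor$. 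This bijection is the key combinatorial input: it lets me import all of the classical pattern-redundancy machinery for i.i.d.~sources on length-$m$ inputs, with the Lemma~\ref{lem:Prob-LB} bound playing the role of the usual maximum-likelihood lower bound.

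Next I would group $S$ by the profile $\overline\varphi$ of the inner pattern. The number of patterns of length $m$ with profile $\overline\varphi$ is $m!/\prod_\mu(\mu!)^{\varphi_\mu}\varphi_\mu!$, so Lemma~\ref{lem:Prob-LB} yields
\[
\sum_{\overline\psi\in S}\sup_p p(\overline\psi)\ \ge\ \tfrac12\Bigl(1-\tfrac1{2r_n}\Bigr)^{m}\,m!\sum_{\overline\varphi\vdash m}\prod_\mu\frac{(\mu/m)^{\mu\varphi_\mu}}{(\mu!)^{\varphi_\mu}}.
\]
Choosing $r_n=\omega(m)$ makes $(1-1/2r_n)^{m}=1-o(1)$, so the entire lower bound is controlled by the partition sum on the right.

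To reach the $n^{1/3}$ scaling I would apply a Hardy--Ramanujan style saddle-point analysis to that partition sum, in the spirit of Szekeres (cited already for Lemma~\ref{lem:card-alt-profiles}). The dominant partitions are those whose parts are concentrated in a window of width $\Theta(m^{1/3})$, with $\varphi_\mu$ decaying geometrically in $\mu$ at a rate fixed by the constraint $\sum_\mu\mu\varphi_\mu=m$. Applying Stirling to $m!$, each $\mu!$, and each $\varphi_\mu!$, the $\log$ of a generic summand is maximized on this extremal shape, and routine algebra converts the maximum into $c\cdot m^{1/3}$ for an explicit constant $c$. Since $m=\lfloor n/2\rfloor$, this yields the factor $c\cdot 2^{-1/3}n^{1/3}(1+o(1))$, matching the prefactor of the theorem.

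The main obstacle will be pinning down the constant exactly as $\log(e^{23/12}/\sqrt{2\pi})$. This requires retaining not just the leading $\mu\log\mu$ contributions from Stirling, but also the $\tfrac12\log(2\pi\mu)$ corrections and integer constants, summed over the $\Theta(m^{1/3})$-sized window, while jointly optimizing the geometric decay rate and the free parameter $r_n$ from Lemma~\ref{lem:Prob-LB}. The rest of the argument (the Shtarkov reduction, the bijection, and the final $\log$) is essentially mechanical once the saddle-point value is identified.
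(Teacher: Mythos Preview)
Your proposal follows exactly the paper's route: Shtarkov sum, restriction to the staircase patterns $1\psi_1 1\psi_2\cdots$, the bijection to arbitrary length-$m$ patterns with $m=\lfloor n/2\rfloor$, the application of Lemma~\ref{lem:Prob-LB}, and the choice of $r_n$ to kill the $(1-1/(2r_n))^{m}$ factor. The only divergence is in your final step. The partition sum you isolate,
\[
m!\sum_{\overline\varphi\vdash m}\prod_{\mu}\frac{(\mu/m)^{\mu\varphi_\mu}}{(\mu!)^{\varphi_\mu}},
\]
is \emph{precisely} the lower bound on the Shtarkov sum for \iid\ patterns of length $m$ that appears in~\cite{orlitsky_universal_2004}. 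The paper therefore does not redo any asymptotic analysis: it simply invokes Lemma~3 and the proof of Theorem~13 of~\cite{orlitsky_universal_2004}, which already deliver the constant $\log\bigl(e^{23/12}/\sqrt{2\pi}\bigr)\,m^{1/3}(1+o(1))$. Your proposed Hardy--Ramanujan/Szekeres saddle-point computation is unnecessary, and your sketch of the extremal profile (``parts concentrated in a window of width $\Theta(m^{1/3})$ with $\varphi_\mu$ decaying geometrically'') is not quite the picture that actually produces this constant in~\cite{orlitsky_universal_2004}; but once you recognize the sum as the known \iid\ object, the constant comes for free and this description is moot.
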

\begin{proof}
From Starkov's sum~\cite{orlitsky_universal_2004} we have that
\[
\hat{R}\left(\mathcal{V}_{\Psi}^{n}\right)=\log\left(\sum_{\overline{\varphi}\in\Phi\left(\mathcal{V}_{\Psi}^{n}\right)}\sum_{\overline{\psi}\in\Psi_{\overline{\varphi}}}\sup_{p\in\mathcal{V}_{\Psi}^{n}}p\left(\overline{\psi}\right)\right).
\]
where $\Psi_{\overline{\varphi}}$ is the set of all patterns with
profile $\overline{\varphi}$.

Suppose first that $n$ is even. Let $\Phi_{k}^{n}$ be the set of
profiles whose largest parts are of size $k$. Since $\Phi_{n/2}^{n}\subseteq\Phi\left(\mathcal{V}_{\Psi}^{n}\right)$,
\begin{align*}
&\hat{R}\left(\mathcal{V}_{\Psi}^{n}\right)  \ge\log\left(\sum_{\overline{\varphi}\in\Phi_{n/2}^{n}}\quad\sum_{\overline{\psi}\in\Psi_{\varphi}}\quad\sup_{p\in\mathcal{V}_{\Psi}^{n}}p\left(\overline{\psi}\right)\right)\\
 & \ge\log\left(\sum_{\overline{\varphi}\in\Phi_{n/2}^{n}}\quad\sum_{\overline{\psi}\in\Psi_{\varphi},\mu\left(1,\overline{\psi}\right)=n/2}\quad\sup_{p\in\mathcal{V}_{\Psi}^{n}}p\left(\overline{\psi}\right)\right)\\
 & =\log\left(\sum_{\overline{\varphi}\in\Phi^{n/2}}\quad\sum_{\overline{\psi}\in\Psi_{\overline{\varphi}}}\quad\sup_{p\in\mathcal{V}_{\Psi}^{n}}p\left(\Psi_{A}\left(\overline{\psi}\right)\right)\right),
\end{align*}
where $\Psi_{A}\left(\overline{\psi}\right)$ is the alternating pattern
$\left(1,\psi_{1}+1,1,\psi_{2}+1,\cdots,1,\psi_{n/2}+1\right)$ obtained
from $\overline{\psi}$. 

From \eqref{eqn:maxprob-tilde}, we obtain \eqref{eq:long} in which where (a) and (b) follow from Lemma 3 in \cite{orlitsky_universal_2004} and the proof of Theorem 13 in \cite{orlitsky_universal_2004}, respectively.
\begin{figure*}
\begin{align}\label{eq:long}
\hat{R}\left(\mathcal{V}_{\Psi}^{n}\right) & \ge\log\left(\sum_{\overline{\varphi}\in\Phi^{n/2}}\sum_{\overline{\psi}\in\Psi_{\overline{\varphi}}}\left(\frac{2r_{n}-1}{2r_{n}}\right)^{n/2}\prod_{\mu=1}^{n/2}\varphi_{\mu}!\left(\frac{\mu}{n/2}\right)^{\mu\varphi_{\mu}}\right)\\
 & \stackrel{\left(\text{a}\right)}{\ge}\log\left(\sum_{\overline{\varphi}\in\Phi^{n/2}}\frac{\left(n/2\right)!}{\prod_{\mu=1}^{n/2}\left(\mu!\right)^{\varphi_{\mu}}\varphi_{\mu}!}\left(\frac{2r_{n}-1}{2r_{n}}\right)^{n/2}\prod_{\mu=1}^{n/2}\varphi_{\mu}!\left(\frac{\mu}{n/2}\right)^{\mu\varphi_{\mu}}\right)\nonumber\\
 & \stackrel{}{\ge}\frac{n}{2}\log\left(1-\frac{1}{2r_{n}}\right)+\log\left(\sum_{\overline{\varphi}\in\Phi^{n/2}}\frac{\left(n/2\right)!}{\prod_{\mu=1}^{n/2}\left(\mu!\right)^{\varphi_{\mu}}\varphi_{\mu}!}\prod_{\mu=1}^{n/2}\varphi_{\mu}!\left(\frac{\mu}{n/2}\right)^{\mu\varphi_{\mu}}\right)\nonumber\\
 & \stackrel{\left(\text{b}\right)}{\ge}\frac{n}{2}\log\left(1-\frac{1}{2r_{n}}\right)+\log\left(\frac{e^{23/12}}{\sqrt{2\pi}}\right)\left(n/2\right)^{1/3}\left(1+o\left(1\right)\right)\nonumber\\
 & \stackrel{}{\ge}\frac{n}{2}\cdot\frac{1}{r_{n}}+\frac{1}{2^{1/3}}\log\left(\frac{e^{23/12}}{\sqrt{2\pi}}\right)n^{1/3}\left(1+o\left(1\right)\right)\nonumber\\
 & \stackrel{}{\ge}\frac{1}{2^{1/3}}\log\left(\frac{e^{23/12}}{\sqrt{2\pi}}\right)n^{1/3}\left(1+o\left(1\right)\right)\nonumber
\end{align}
\rule{\linewidth}{0.2mm}
\end{figure*}
The proof for odd $n$ is similar.
\end{proof}
\section{Sequential Estimators for Alternating Sequences \label{sec:Sequential-Estimators}}

In the previous section we studied the problem of assigning probabilities to patterns of a certain length without prior information. 
In this section, we address a more practical problem. Namely, given a pattern $\psi^{n-1}$ of length $n-1$, what is our best estimate $q\left(\psi_{n}|\psi^{n-1}\right)$ of the probability of $\psi_{n}$ being the next observed symbol, for $\psi_{n}\in\left\{ 1,2,\cdots,1+\max_{i\le n-1}\psi_{i}\right\} $. This sequential estimator also assigns probabilities to patterns of length $n$ in a natural way. That is, 
\[
q\left(\psi^n\right)=\prod_{i=1}^{n}q\left(\psi_{i}|\psi^{i-1}\right),
\]
where $\psi^{0}$ is an empty string.

We present a sequential estimator $q_{1/2}$ for patterns of alternating
sequences which is based on a sequential estimator for patterns of
\iid sequences presented in \cite{orlitsky_universal_2004} by Orlitsky
et al. 

Let 
\[
\hat{p}_{\psi^n}\left(\psi^n\right):\,=\sup_{p\in\mathcal{V}_{\Psi}^{n}}p\left(\psi^n\right)
\]
be the largest probability assigned to $\psi^n$ by any distribution
in $\mathcal{V}_{\Psi}^{n}$ and let $q$ be as defined in \eqref{eq:qL}, i.e.,
\begin{equation}\label{eq:qLL}
q\left({\psi}^n\right)=\frac{\nicefrac{1}{L(\psi^n)}} {\sum_{\overline{\psi}\in\tilde{\Psi}^{n}}\nicefrac{1}{L( \overline\psi)}},
\end{equation}
for which we have
\begin{equation}
\frac{\hat{p}_{\psi^n}\left(\psi^n\right)}{q \left(\psi^n\right)}
\le n\exp\left(\pi\sqrt{\frac{2}{3}}\sqrt{n}\right) \label{eqn:phatn}
\end{equation}

For an alternating pattern $\psi^{i}$, let \[\tilde{\Psi}^{n}\left(\psi^{i}\right):=\left\{ \overline{z}\in\tilde{\Psi}^{n}:z_{1}z_{2}\cdots z_{i}=\psi^{i}\right\} \]
be the set of alternating patterns of length $n\ge i$ whose first
$i$ elements are the same as $\psi^{i}$. Accordingly, from $ q \left(\overline{z}\right),\overline{z}\in\tilde{\Psi}^{n}$,
we define the distribution 
\[
 q ^{n}\left(\psi^{i}\right):=\sum_{\overline{z}\in\tilde{\Psi}^{n}\left(\psi^{i}\right)} q \left(\overline{z}\right)
\]
over $\tilde{\Psi}^{i}$ for $i\le n$.

We define the estimator $\qhalfn$ such that $\qhalfn\left(1\right):=1$
and 
\begin{align*}
\qhalfn\left(\psi_{i}|\psi^{i-1}\right) & :=\frac{ q ^{n}\left(\psi^{i}\right)}{ q ^{n}\left(\psi^{i-1}\right)},
\end{align*}
for $i\le n$. Note that $\qhalfn\left(\psi^n\right)= q ^{n}\left(\psi^n\right)$
and thus, by \eqref{eqn:phatn}, we have the following theorem.
\begin{thm}
The redundancy of $\qhalfn$ at time $n$ is sub-linear in $n$. Namely,
\begin{align*}
\hat{R}\left(\qhalfn,\mathcal{V}_{\Psi}^{n}\right)&=\sup_{\psi^n\in\tilde{\Psi}^{n}}\log\frac{\hat{p}_{\psi^n}\left(\psi^n\right)}{\qhalfn\left(\psi^n\right)}\\&\le\left(\pi\sqrt{\frac{2}{3}}\log e\right) \sqrt{n}+\log n\cdot
\end{align*}

\end{thm}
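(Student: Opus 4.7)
The plan is essentially to observe that the sequential estimator $\qhalfn$ was engineered so that its block probability on any length-$n$ pattern coincides exactly with the block estimator $q$ from \eqref{eq:qLL}, after which the theorem reduces to the already-established inequality \eqref{eqn:phatn}.

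First, I would unpack the definition of $\qhalfn$ as a telescoping product:
\[
\qhalfn(\psi^n) = \prod_{i=1}^n \qhalfn(\psi_i\mid\psi^{i-1}) = \prod_{i=1}^n \frac{q^n(\psi^i)}{q^n(\psi^{i-1})} = \frac{q^n(\psi^n)}{q^n(\psi^0)}.
\]
I then need to show that the numerator equals $q(\psi^n)$ and that the denominator equals $1$. For the numerator, observe that $\tilde{\Psi}^n(\psi^n) = \{\psi^n\}$ since a length-$n$ alternating pattern whose first $n$ symbols match $\psi^n$ must equal $\psi^n$; hence $q^n(\psi^n) = q(\psi^n)$. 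For the denominator, $\psi^0$ is the empty prefix, so $\tilde{\Psi}^n(\psi^0) = \tilde{\Psi}^n$ and $q^n(\psi^0) = \sum_{\overline{z}\in\tilde{\Psi}^n} q(\overline{z}) = 1$ because $q$ defined in \eqref{eq:qLL} is a probability distribution on $\tilde{\Psi}^n$. (The boundary condition $\qhalfn(1) := 1$ is consistent with this, since every alternating pattern starts with the symbol $1$.) The upshot is $\qhalfn(\psi^n) = q(\psi^n)$.

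Second, with this identification in hand, I would invoke \eqref{eqn:phatn} directly. That inequality says $\hat{p}_{\psi^n}(\psi^n)/q(\psi^n) \le n\exp(\pi\sqrt{2/3}\sqrt{n})$, valid for every $\psi^n \in \tilde{\Psi}^n$. Taking logarithms and using $\log\exp(x) = x\log e$ yields
\[
\log \frac{\hat{p}_{\psi^n}(\psi^n)}{\qhalfn(\psi^n)} \le \left(\pi\sqrt{\tfrac{2}{3}}\log e\right)\sqrt{n} + \log n.
\]
Taking the supremum over $\psi^n \in \tilde{\Psi}^n$ delivers the claimed bound on $\hat{R}(\qhalfn,\mathcal{V}_{\Psi}^n)$.

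There is no real obstacle here, since all the work was done in Theorem~\ref{thm:1161}: the only thing to check carefully is the boundary behavior of the telescoping product, specifically that $q^n(\psi^0) = 1$ and that $\tilde{\Psi}^n(\psi^n)$ is a singleton, so that the sequential estimator inherits the bound from the block estimator without any loss. This is the standard ``horizon-dependent mixture'' trick, and the fact that $\qhalfn$ is well defined (nonnegative, summing to one over $\tilde{\Psi}^n$) follows immediately from the same telescoping identity applied in reverse.
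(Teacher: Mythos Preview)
Your proposal is correct and follows exactly the approach the paper takes: the paper simply notes, immediately before the theorem, that $\qhalfn(\psi^n)=q^n(\psi^n)$ and invokes \eqref{eqn:phatn}. Your telescoping argument and boundary check ($q^n(\psi^0)=1$, $\tilde{\Psi}^n(\psi^n)=\{\psi^n\}$) just make that one-line observation explicit.
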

Note that $\qhalfn$ has the drawback that it is applicable only to
patterns with predetermined length $n$. Such estimators are called
horizon-dependent \cite{cesa2006prediction}; assigned probabilities
depend on the ``horizon'' $n$. Thus $\qhalfn$ cannot be used to
sequentially estimate the probabilities for a pattern whose length
is unknown. 

However, it is easy to remove this restriction using the so-called
``doubling trick'' where time is divided into periods each twice
as long as its predecessors. That is, the horizon $h_{i}$ at time
$i$ is considered to be $2^{\left\lceil \log i\right\rceil }$, the
smallest power of two which is at least as large $i$. The estimator
$\qhalf$, where 
\begin{gather*}
\qhalf\left(1\right)=1,\qquad\qhalf\left(\psi^{i}|\psi_{i-1}\right)=\frac{ q ^{h_{i}}\left(\psi^{i}\right)}{ q ^{h_{i}}\left(\psi^{i-1}\right)},
\end{gather*}
is thus horizon-independent and the following theorem holds uniformly
over time.

\begin{thm}
The worst case redundancy of the sequential estimator $q_{1/2}$ is
bounded by 
\[
\hat{R}\left(\mathcal{V}_{\Psi}^{n},q_{1/2}\right)\le2+\frac{5}{2}\log n+\frac{1}{2}\log^{2}n+\frac{4\pi\log e\sqrt{n}}{\sqrt{3}\left(2-\sqrt{2}\right)}.
\]
\end{thm}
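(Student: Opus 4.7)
The plan is to apply the doubling trick, reducing the analysis to $J=\lceil\log n\rceil$ invocations of the horizon-dependent bound from the preceding theorem. First I would partition $\{1,\dots,n\}$ into epochs $\mathcal{E}_{j}=\{2^{j-1}+1,\dots,a_{j}\}$ with $a_{j}:=\min(2^{j},n)$ (and $a_{0}:=1$), on which the horizon $h_{i}$ is constant and equal to $2^{j}$. By the definition of $q_{1/2}$, within each epoch the one-step predictor satisfies $q_{1/2}(\psi_{i}\mid\psi^{i-1})=q^{2^{j}}(\psi^{i})/q^{2^{j}}(\psi^{i-1})$, and the product telescopes within the epoch, so that
\[
q_{1/2}(\psi^{n})=\prod_{j=1}^{J}\frac{q^{2^{j}}(\psi^{a_{j}})}{q^{2^{j}}(\psi^{a_{j-1}})}.
\]

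Next, I would control each factor via a marginal extension of the preceding theorem. That theorem was established from the pointwise inequality $\hat p_{\tilde\psi}(\tilde\psi)\le h\,e^{\pi\sqrt{2/3}\sqrt{h}}\,q^{h}(\tilde\psi)$ for every length-$h$ pattern, which by summation over length-$h$ extensions preserves to $\hat p_{\psi^{k}}(\psi^{k})\le h\,e^{\pi\sqrt{2/3}\sqrt{h}}\,q^{h}(\psi^{k})$ for every $k\le h$. Applying this at horizon $h=2^{j}$ both to $\psi^{a_{j}}$ and to $\psi^{a_{j-1}}$, and telescoping $\log\hat p_{\psi^{n}}(\psi^{n})$ along the same dyadic chain $a_{0}<a_{1}<\dots<a_{J}=n$, expresses $\log[\hat p_{\psi^{n}}(\psi^{n})/q_{1/2}(\psi^{n})]$ as a sum of $J$ per-epoch contributions, each bounded by $\pi\sqrt{2/3}\log e\,\sqrt{2^{j}}+\log 2^{j}$ plus small interface corrections.

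Summing the dominant per-epoch terms then yields the stated constants. The $\sqrt{n}$ contribution is a geometric series of ratio $\sqrt{2}$,
\[
\sum_{j=1}^{J}\pi\sqrt{\tfrac{2}{3}}\log e\,\sqrt{2^{j}}\le\pi\sqrt{\tfrac{2}{3}}\log e\cdot\frac{\sqrt{2}\,\sqrt{2^{J}}}{\sqrt{2}-1}\le\frac{4\pi\log e\sqrt{n}}{\sqrt{3}(2-\sqrt{2})},
\]
using $2^{J}\le 2n$ and the identity $2\sqrt{2}\sqrt{2/3}/(\sqrt{2}-1)=4/[\sqrt{3}(2-\sqrt{2})]$. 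The $\log$ parts sum to $\sum_{j=1}^{J}\log 2^{j}=J(J+1)/2\le\tfrac{1}{2}\log^{2}n+\tfrac{3}{2}\log n+1$, and the remaining $\log n$ term and additive constant in the theorem account for the truncation of the last epoch when $n<2^{J}$ and for the boundary alignment of $\hat p_{\psi^{n}}(\psi^{n})$ across consecutive marginalizations.

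The hardest part will be the per-epoch bookkeeping for the denominators $q^{2^{j}}(\psi^{a_{j-1}})$: the trivial estimate $q^{2^{j}}(\psi^{a_{j-1}})\le 1$ is far too loose, since it introduces an uncontrolled $[\hat p_{\psi^{n}}(\psi^{n})]^{J}$ factor, while the two candidate quantities $q^{2^{j}}(\psi^{a_{j-1}})$ and $q^{2^{j-1}}(\psi^{a_{j-1}})$ are based on different partition-counting normalizations and are not directly comparable. The delicate step is to organize the telescoping so that the denominator of epoch $j$ pairs with the marginalized horizon-$2^{j}$ universal bound on the same prefix, so that the extremal distributions align across consecutive epochs and exactly $J$ clean horizon-dependent redundancy terms survive, with the residue absorbed into the $O(\log n)$ boundary correction. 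Once this cancellation is in place, the three sums above combine into the stated bound.
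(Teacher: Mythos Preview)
Your epoch decomposition and the telescoped expression
\[
q_{1/2}(\psi^{n})=\prod_{j=1}^{J}\frac{q^{2^{j}}(\psi^{a_{j}})}{q^{2^{j}}(\psi^{a_{j-1}})}
\]
are correct, and the summation of the geometric and arithmetic series at the end is fine. The gap is exactly where you flag it: the per-epoch denominator $q^{2^{j}}(\psi^{a_{j-1}})$. Your plan to telescope $\log\hat p_{\psi^{n}}(\psi^{n})$ along the same dyadic chain produces, in each epoch, a factor
\[
\frac{q^{2^{j}}(\psi^{a_{j-1}})}{\hat p_{\psi^{a_{j-1}}}(\psi^{a_{j-1}})},
\]
and for your bound to go through you need this to be at most $1$ (or at worst $O(1)$). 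There is no reason for that: $q^{h}$ is a specific sub-probability on prefixes, not an element of the model class, and the inequality $\hat p_{\psi^{k}}(\psi^{k})\le h\,e^{\pi\sqrt{2/3}\sqrt h}\,q^{h}(\psi^{k})$ goes the wrong way. Your closing paragraph describes the difficulty accurately but does not resolve it; ``organizing the telescoping so that extremal distributions align'' is not a mechanism.

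The paper avoids telescoping $\hat p$ altogether. It factors
\[
\frac{\hat p_{\psi^{n}}(\psi^{n})}{q_{1/2}(\psi^{n})}=\frac{\hat p_{\psi^{n}}(\psi^{n})}{q^{h_{n}}(\psi^{n})}\cdot\frac{q^{h_{n}}(\psi^{n})}{q_{1/2}(\psi^{n})},
\]
applies the marginalized horizon bound \emph{once} to the first factor, and controls the second factor---which involves no $\hat p$ at all---by induction on $i$ with $h_{n}=2^{i+1}$. The key new idea you are missing is a direct upper bound on $q^{2^{i+1}}(\psi^{2^{i}})$ coming from the $L$-class symmetry: because $q^{2^{i+1}}(\cdot)$ takes the same value on every length-$2^{i}$ pattern in the class $L(\psi^{2^{i}})$ (same profile and same last-symbol multiplicity), and these classes are disjoint with total $q^{2^{i+1}}$-mass at most $1$, one gets
\[
q^{2^{i+1}}(\psi^{2^{i}})\le\frac{1}{|L(\psi^{2^{i}})|}.
\]
Pairing this with the explicit lower bound $q^{2^{i}}(\psi^{2^{i}})\ge\bigl(2^{i}e^{\pi\sqrt{2/3}\sqrt{2^{i}}}|L(\psi^{2^{i}})|\bigr)^{-1}$, which is immediate from the definition of $q$, yields
\[
\frac{q^{2^{i+1}}(\psi^{2^{i}})}{q^{2^{i}}(\psi^{2^{i}})}\le 2^{i}\exp\Bigl(\pi\sqrt{\tfrac{2}{3}}\sqrt{2^{i}}\Bigr),
\]
and then your geometric/arithmetic sums finish the job. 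So the fix is not a rearrangement of your telescoping but a replacement of the $\hat p$-based comparison by this combinatorial $L$-class argument.
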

\begin{proof}
By definition, 
\[
\hat{R}\left(\mathcal{V}_{\Psi}^{n},q_{1/2}\right)\le\max_{\psi_{1}^{n}\in\tilde{\Psi}^{n}}\log\frac{\hat{p}_{\psi_{1}^{n}}\left(\psi_{1}^{n}\right)}{q_{1/2}\left(\psi_{1}^{n}\right)}\cdot
\]
Write 
\[
\frac{\hat{p}_{\psi_{1}^{n}}\left(\psi_{1}^{n}\right)}{q_{1/2}\left(\psi_{1}^{n}\right)}=\frac{\hat{p}_{\psi_{1}^{n}}\left(\psi_{1}^{n}\right)}{q^{h_{n}}\left(\psi_{1}^{n}\right)}\cdot\frac{q^{h_{n}}\left(\psi_{1}^{n}\right)}{q_{1/2}\left(\psi_{1}^{n}\right)}\cdot
\]
From Lemmas \ref{lem:16} and \ref{lem:sequential}, we obtain
\[
\frac{\hat{p}_{\psi_{1}^{n}}\left(\psi_{1}^{n}\right)}{q_{1/2}\left(\psi_{1}^{n}\right)}\le h_{n}^{1+\left(\log h_{n}-1\right)/2}\exp\left(\pi\sqrt{\frac{2}{3}}\frac{2\sqrt{h_{n}}}{2-\sqrt{2}}\right)\cdot
\]
The theorem follows after some minor algebra and by noting that $h_{n}<2n$.\end{proof}
\begin{lem}
\label{lem:16}For an alternating pattern $\psi_{1}^{n}$,
we have 
\[
\frac{\hat{p}_{\psi_{1}^{n}}\left(\psi_{1}^{n}\right)}{q^{h_{n}}\left(\psi_{1}^{n}\right)}\le h_{n}\exp\left(\pi\sqrt{\frac{2}{3}}\sqrt{h_{n}}\right)\cdot
\]
\end{lem}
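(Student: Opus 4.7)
The plan is to reduce Lemma \ref{lem:16} to the length-$h_n$ redundancy bound already proved in Theorem \ref{thm:1161}, exploiting that $h_n\ge n$ so any length-$n$ distribution in $\mathcal{V}_\Psi^n$ is the length-$n$ marginal of a distribution in $\mathcal{V}_\Psi^{h_n}$. Concretely, every $p\in\mathcal{V}_\Psi^n$ arises from some underlying symbol distribution on $\mathcal{A}$, and that same symbol distribution induces a distribution $\tilde p\in\mathcal{V}_\Psi^{h_n}$ on length-$h_n$ alternating patterns.

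My first step is to establish the cross-length marginalization identity
\[
p(\psi_1^n)\;=\;\sum_{\overline z\,\in\,\tilde\Psi^{h_n}(\psi_1^n)}\tilde p(\overline z).
\]
To prove this, I would expand $p(\psi_1^n)=\sum_{\overline v:\,\Psi(\overline v)=\psi_1^n} p_{\mathcal V^n}(\overline v)$ and, for each such $\overline v$, sum the Markov weights \eqref{eqn:p-alt} over all valid alternating suffixes $\overline u$ of length $h_n-n$: the conditional factors contributed by $\overline u$ telescope to one, and regrouping the extensions by their pattern recovers the right-hand side.

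Next, I apply the proof of Theorem \ref{thm:1161} at horizon $h_n$, which actually yields the pointwise estimate
\[
\tilde p(\overline z)\;\le\;h_n\exp\!\left(\pi\sqrt{\tfrac{2}{3}}\sqrt{h_n}\right)q(\overline z),\qquad \overline z\in\tilde\Psi^{h_n},
\]
for the same estimator $q$ defined in \eqref{eq:qL}, via $\bigl|\tilde\Phi^{h_n}\bigr|\le\ptns{h_n}$ together with the Hardy--Ramanujan bound on $\ptns{h_n}$. Substituting this into the identity of the previous step and pulling the constants out of the sum gives $p(\psi_1^n)\le h_n\exp(\pi\sqrt{2/3}\,\sqrt{h_n})\,q^{h_n}(\psi_1^n)$, since by the definition of $q^{h_n}$ the remaining sum equals $q^{h_n}(\psi_1^n)$. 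Taking the supremum over $p\in\mathcal{V}_\Psi^n$ then yields the lemma.

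The only genuinely delicate point is the marginalization identity: because $\Psi$ renames symbols by order of first appearance, one must verify that $\tilde\Psi^{h_n}(\psi_1^n)$ is in bijection with the $\Psi$-images of the valid length-$h_n$ extensions of alternating sequences whose first $n$ symbols have pattern $\psi_1^n$. I expect this to be routine bookkeeping rather than a real obstacle, after which Steps 2 and 3 are essentially immediate consequences of Theorem \ref{thm:1161} and the definition of $q^{h_n}$.
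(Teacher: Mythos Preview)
Your proposal is correct and follows essentially the same route as the paper: both establish the marginalization identity $p(\psi_1^n)=\sum_{\overline z\in\tilde\Psi^{h_n}(\psi_1^n)}\tilde p(\overline z)$, then apply the pointwise ratio bound \eqref{eqn:phatn} from the proof of Theorem~\ref{thm:1161} termwise and recognize the remaining sum as $q^{h_n}(\psi_1^n)$. The only cosmetic difference is that the paper first replaces $\tilde p(\overline z)$ by $\hat p_{\overline z}(\overline z)$ via $\sup_p\sum_z\le\sum_z\sup_p$ before invoking \eqref{eqn:phatn}, whereas you apply the pointwise bound directly to $\tilde p$ and take the supremum at the end; the two orderings are logically equivalent.
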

\begin{proof}
For any \iid induced distribution $p$ over alternating patterns, and
for $t\ge n$, note that 
\[
\sum_{\overline{z}\in\tilde{\Psi}^{t}\left(\psi_{1}^{n}\right)}p\left(\overline{z}\right)=p\left(\psi_{1}^{n}\right).
\]
Hence, 
\begin{align}
\hat{p}_{\psi_{1}^{n}}\left(\psi_{1}^{n}\right) & =\sup_{p}\sum_{\overline{z}\in\tilde{\Psi}^{h_{n}}\left(\psi_{1}^{n}\right)}p\left(\overline{z}\right)\nonumber \\
 & \le\sum_{\overline{z}\in\tilde{\Psi}^{h_{n}}\left(\psi_{1}^{n}\right)}\hat{p}_{\overline{z}}\left(\overline{z}\right).\label{eq:q1}
\end{align}
Using \eqref{eqn:phatn}, we can write 
\begin{align}
&\sum_{\overline{z}\in\tilde{\Psi}^{h_{n}}\left(\psi_{1}^{n}\right)}\hat{p}_{\overline{z}}\left(\overline{z}\right)\\ & \le h_{n}\exp\left(\pi\sqrt{\frac{2}{3}}\sqrt{h_{n}}\right)\sum_{\overline{z}\in\tilde{\Psi}^{h_{n}}\left(\psi_{1}^{n}\right)}q\left(\overline{z}\right).\label{eq:q2}
\end{align}
Furthermore, observe that 
\begin{equation}
\sum_{\overline{z}\in\tilde{\Psi}^{h_{n}}\left(\psi_{1}^{n}\right)}q\left(\overline{z}\right)=q^{h_{n}}\left(\psi_{1}^{n}\right).\label{eq:q3}
\end{equation}
From \eqref{eq:q1}, \eqref{eq:q2}, and \eqref{eq:q3}, we obtain
the desired result.\end{proof}

\begin{lem}
\label{lem:sequential}For $n\ge2$ and an alternating pattern $\psi_{1}^{n}$,
we have
\[
\frac{q^{h_{n}}\left(\psi_{1}^{n}\right)}{q_{1/2}\left(\psi_{1}^{n}\right)}\le h_{n}^{\left(\log h_{n}-1\right)/2}\exp\left(\pi\sqrt{\frac{2}{3}}\frac{\sqrt{h_{n}}}{\sqrt{2}-1}\right).
\]
\end{lem}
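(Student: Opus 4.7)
The plan is to telescope the doubling-trick estimator $q_{1/2}$ across its epochs, reduce the resulting expression by cancellation and dyadic recursion, and bound each remaining ``horizon-doubling'' ratio using the symmetry of $q$ on equivalence classes of patterns. Write $J=\log h_n$. Within each epoch $j$ (the indices $i$ for which $h_i=2^j$), the conditional factors $q_{1/2}(\psi_i\mid\psi_1^{i-1})=q^{2^j}(\psi_i\mid\psi_1^{i-1})$ telescope to $q^{2^j}(\psi_1^{b_j})/q^{2^j}(\psi_1^{2^{j-1}})$ with $b_j=\min(2^j,n)$. Since $b_J=n$ and $h_n=2^J$, the last-epoch numerator equals $q^{h_n}(\psi_1^n)$ and cancels with the numerator of the target ratio, giving
\[
\frac{q^{h_n}(\psi_1^n)}{q_{1/2}(\psi_1^n)}=\frac{q^{h_n}(\psi_1^{2^{J-1}})}{q_{1/2}(\psi_1^{2^{J-1}})}.
\]
Applying the same identity at the dyadic scale $2^{K}$ (where $h_{2^K}=2^K$) and iterating down to the trivial base $q^1(\psi_1)/q_{1/2}(\psi_1)=1$ I would obtain
\[
\frac{q^{h_n}(\psi_1^n)}{q_{1/2}(\psi_1^n)}\le\prod_{j=1}^{J}\sup_{\overline{\psi}\in\tilde{\Psi}^{2^{j-1}}}\frac{q^{2^j}(\overline{\psi})}{q^{2^{j-1}}(\overline{\psi})}.
\]

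The core per-epoch bound comes from the symmetry of $q$. Since $q$ in \eqref{eq:qL} depends on $\overline{\psi}$ only through $L(\overline{\psi})$, it is constant on each equivalence class (same profile and same last-element multiplicity) appearing in Lemma~\ref{lem:same-multip}; a class-preserving bijection on extensions transports this invariance to the prefix marginals, so that $q^t(\psi_1^s)=q^t(\psi_1'^s)$ whenever $\psi_1^s,\psi_1'^s$ lie in the same class. Summing over a class of size $L(\psi_1^s)$ then forces $L(\psi_1^s)\,q^t(\psi_1^s)\le 1$, and combining this with the exact value $q^s(\psi_1^s)=1/(L(\psi_1^s)Z_s)$ gives the clean comparison $q^t(\psi_1^s)/q^s(\psi_1^s)\le Z_s$. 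With $s=2^{j-1}$, $t=2^j$, and the bound $Z_s\le s\exp(\pi\sqrt{2/3}\sqrt{s})$ already established inside the proof of Theorem~\ref{thm:1161}, the $j$-th supremum above is at most $2^{j-1}\exp(\pi\sqrt{2/3}\,2^{(j-1)/2})$.

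Multiplying the per-epoch bounds and collecting, the arithmetic sum $\sum_{j=1}^J(j-1)=J(J-1)/2$ produces $2^{J(J-1)/2}=h_n^{(\log h_n-1)/2}$, and the geometric sum $\sum_{j=1}^J 2^{(j-1)/2}=(2^{J/2}-1)/(\sqrt{2}-1)\le\sqrt{h_n}/(\sqrt{2}-1)$ produces $\exp(\pi\sqrt{2/3}\,\sqrt{h_n}/(\sqrt{2}-1))$, which is exactly the stated bound. The main obstacle will be the prefix-marginal symmetry $q^t(\psi_1^s)=q^t(\psi_1'^s)$: Lemma~\ref{lem:same-multip} supplies the analogous invariance only for \iid-induced distributions, so lifting it to $q$-marginals requires a separate combinatorial argument, essentially that the equivalence class of a length-$t$ extension is determined by the equivalence class of its length-$s$ prefix together with the suffix's trajectory through (profile, last-multiplicity) space. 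Once that is in hand, the rest is routine bookkeeping.
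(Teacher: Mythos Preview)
Your proposal is correct and follows essentially the same route as the paper: both telescope across dyadic epochs (the paper phrases it as induction on $i$ with $h_n=2^{i+1}$, you as an explicit product over $j=1,\ldots,J$), both bound each horizon-doubling ratio $q^{2^j}(\psi_1^{2^{j-1}})/q^{2^{j-1}}(\psi_1^{2^{j-1}})$ via the $L$-class symmetry of $q$ to get $\le Z_{2^{j-1}}\le 2^{j-1}\exp\bigl(\pi\sqrt{2/3}\,2^{(j-1)/2}\bigr)$, and both finish with the same arithmetic and geometric sums. The prefix-marginal symmetry you flag as the main obstacle is exactly what the paper invokes as well (stated there simply as ``all patterns in $L(\psi_1^{2^i})$ have the same assigned probability $q^{2^{i+1}}(\psi_1^{2^i})$''), and the paper likewise does not spell out the combinatorial justification you sketch.
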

\begin{proof}
We show inductively that 
\[
\frac{q^{2^{i+1}}\left(\psi_{1}^{n}\right)}{q_{1/2}\left(\psi_{1}^{n}\right)}\le\left(2^{i+1}\right)^{i/2}\exp\left(\pi\sqrt{\frac{2}{3}}\frac{\sqrt{2^{i+1}}}{\sqrt{2}-1}\right)
\]
for $2^{i}<n\le2^{i+1}$ and an alternating pattern $\psi_{1}^{n}$.
This shall prove the lemma since $h_{n}=2^{i+1}$. 

From the definiton of $q_{1/2}$, it follows that 
\begin{equation}
\frac{q^{2^{i+1}}\left(\psi_{1}^{n}\right)}{q_{1/2}\left(\psi_{1}^{n}\right)}=\frac{q^{2^{i+1}}\left(\psi_{1}^{2^{i}}\right)}{q_{1/2}\left(\psi_{1}^{2^{i}}\right)}=\frac{q^{2^{i+1}}\left(\psi_{1}^{2^{i}}\right)}{q^{2^{i}}\left(\psi_{1}^{2^{i}}\right)}\cdot\frac{q^{2^{i}}\left(\psi_{1}^{2^{i}}\right)}{q_{1/2}\left(\psi_{1}^{2^{i}}\right)}\cdot\label{eq:q-chain}
\end{equation}

As the induction hypothesis, we have that 
\begin{equation}
\frac{q^{2^{i}}\left(\psi_{1}^{2^{i}}\right)}{q_{1/2}\left(\psi_{1}^{2^{i}}\right)}\le2^{i\left(i-1\right)/2}\exp\left(\pi\sqrt{\frac{2}{3}}\frac{\sqrt{2^{i}}}{\sqrt{2}-1}\right).\label{eq:induction}
\end{equation}

All patterns in $L\left(\psi_{1}^{2^{i}}\right)$ have the same assigned
probability $q^{2^{i+1}}\left(\psi_{1}^{2^{i}}\right)$. Hence,
\[
\sum_{\overline{\psi}\in L\left(\psi_{1}^{2^{i}}\right)}q^{2^{i+1}}\left(\overline{\psi}\right)=\left|L\left(\psi_{1}^{2^{i}}\right)\right|q^{2^{i+1}}\left(\psi_{1}^{2^{i}}\right)\cdot
\]
On the other hand, for all patterns $\overline{\psi}\in L\left(\psi_{1}^{2^{i}}\right)$
the sets $\tilde{\Psi}^{2^{i+1}}\left(\overline{\psi}\right)$ are
disjoint. This implies that 
\begin{align*}
\sum_{\overline{\psi}\in L\left(\psi_{1}^{2^{i}}\right)}q^{2^{i+1}}\left(\overline{\psi}\right)&=\sum_{\overline{\psi}\in L\left(\psi_{1}^{2^{i}}\right)}\sum_{\overline{z}\in\tilde{\Psi}^{2^{i+1}}\left(\overline{\psi}\right)}q\left(\overline{z}\right)\\
&\le\sum_{\overline{z}\in\tilde{\Psi}^{2^{i+1}}}q\left(\overline{z}\right)\le1.
\end{align*}
Thus, we obtain
\begin{equation}
q^{2^{i+1}}\left(\psi_{1}^{2^{i}}\right)\le\frac{1}{\left|L\left(\psi_{1}^{2^{i}}\right)\right|}\label{eq:q-up}
\end{equation}
From \eqref{eq:qLL}, we have 
\begin{equation}
q^{2^{i}}\left(\psi_{1}^{2^{i}}\right)\ge\frac{1}{2^{i}\exp\left(\pi\sqrt{\frac{2}{3}}\sqrt{2^{i}}\right)\left|L\left(\psi_{1}^{2^{i}}\right)\right|}\cdot\label{eq:q-low}
\end{equation}
From \eqref{eq:q-up} and \eqref{eq:q-low}, we find
\[
\frac{q^{2^{i+1}}\left(\psi_{1}^{2^{i}}\right)}{q^{2^{i}}\left(\psi_{1}^{2^{i}}\right)}\le2^{i}\exp\left(\pi\sqrt{\frac{2}{3}}\sqrt{2^{i}}\right)\cdot
\]
This inequality, along with \eqref{eq:q-chain} and \eqref{eq:induction},
complete the proof.
\end{proof}

\section{Estimating Distribution of Source}\label{sec:recovery}
In this section, we explain how to reconstruct the noiseless source probabilities from estimates of probabilities provided by alternating sequences. 
First, recall from Lemma \ref{lem:n-v-N}, that with high probability, $n$ is of the same order as $N$ and thus, with high probability, the length $n$ of the alternating sequence is large if the length of the source sequence is large.

 Assume that the source has alphabet $\mathcal{A}=\{a_1,a_2,\cdots\}$ with probability $p_{a_j}$ for element $a_j$. Suppose $p_{a_ia_j}$ is the probability of observing $a_j$ after $a_i$ in the alternating sequence and assume that the correct values of $p_{a_1a_2}$ and $p_{a_2a_1}$ are given. We have
\begin{equation*}
p_{a_1a_2} = \frac{p_{a_2}}{1-p_{a_1}},\qquad
p_{a_2a_1} = \frac{p_{a_1}}{1-p_{a_2}}
\end{equation*}
which implies that $p_{a_1}$ and $p_{a_2}$ can be found by
\begin{align*}
p_{a_1} &= p_{a_2a_1}\frac{1-p_{a_1a_2}}{1-p_{a_1a_2}p_{a_2a_1}},  \\ \nonumber
p_{a_2} &= p_{a_1a_2}\frac{1-p_{a_2a_1}}{1-p_{a_1a_2}p_{a_2a_1}}\cdot
\end{align*}

Given $p_{a_1a_j}$ for $j\ge2$, the remaining probabilities may be obtained by noting that $\frac{p_{a_1a_j}}{p_{a_1a_2}}=\frac{p_{a_j}}{p_{a_2}}$ and thus
\[p_{a_j}=p_{a_2} \frac{p_{a_1a_j}}{p_{a_1a_2}}
\]
gives the probabilities $p_{a_j}$ for $j\ge3$.

Although as with any estimator, the estimators presented here for the alternating sequence do not find probabilities with zero error, we are justified in assuming that the estimates obtained from these estimators are ``close'' to the correct values since their redundancy is vanishing. Hence, the estimates of the probabilities $p_{a_ia_j}$ of the alternating sequence  can be used to obtain estimates for probabilities $p_i$ of the source sequence as explained above.
\bibliographystyle{IEEEtran}
\bibliography{bib}
\end{document}